\theoremstyle{definition}
\newtheorem{lemma}{Lemma}
\theoremstyle{remark}
\newcommand*{\mybox}[1]{%
  \framebox{\raisebox{0cm}[0.5\baselineskip][0.05\baselineskip]{%
    \hbox to 0.1cm {\hss#1\hss}}}\hspace{0.05cm}}
\begin{document}
\title{Physical ZKP for Connected Spanning Subgraph: Applications to Bridges Puzzle and Other Problems}
\author[1]{Suthee Ruangwises\thanks{\texttt{ruangwises@gmail.com}}}
\author[1]{Toshiya Itoh\thanks{\texttt{titoh@c.titech.ac.jp}}}
\affil[1]{Department of Mathematical and Computing Science, Tokyo Institute of Technology, Tokyo, Japan}
\date{}
\maketitle

\begin{abstract}
An undirected graph $G$ is known to both the prover $P$ and the verifier $V$, but only $P$ knows a subgraph $H$ of $G$. Without revealing any information about $H$, $P$ wants to convince $V$ that $H$ is a connected spanning subgraph of $G$, i.e. $H$ is connected and contains all vertices of $G$. In this paper, we propose an unconventional zero-knowledge proof protocol using a physical deck of cards, which enables $P$ to physically show that $H$ satisfies the condition without revealing it. We also show applications of this protocol to verify solutions of three well-known NP-complete problems: the Hamiltonian cycle problem, the maximum leaf spanning tree problem, and a popular logic puzzle called Bridges.

\textbf{Keywords:} zero-knowledge proof, card-based cryptography, connected spanning subgraph, Hamiltonian cycle, maximum leaf spanning tree, graph, Bridges, puzzle
\end{abstract}

\section{Introduction}
A \textit{zero-knowledge proof (ZKP)} is an interactive protocol introduced by Goldwasser et al. \cite{zkp0}, which enables a prover $P$ to convince a verifier $V$ that a statement is correct without revealing any other information. A ZKP with perfect completeness and soundness must satisfy the following three properties.
\begin{enumerate}
	\item \textbf{Perfect Completeness:} If the statement is correct, then $V$ always accepts.
	\item \textbf{Perfect Soundness:} If the statement is incorrect, then $V$ always rejects.
	\item \textbf{Zero-knowledge:} During the verification, $V$ gets no extra information other than the correctness of the statement. Formally, there exists a probabilistic polynomial time algorithm $S$ (called a \textit{simulator}), without an access to $P$ but with a black-box access to $V$, such that the outputs of $S$ follow the same probability distribution as the outputs of the actual protocol.
\end{enumerate}

Goldreich et al. \cite{zkp} proved that a computational ZKP exists for every NP problem. Several recent results, however, instead considered an unconventional way of constructing ZKPs by using physical objects such as a deck of cards and envelopes. The benefit of these physical protocols is that they allow external observers to check that the prover truthfully executes the protocol (which is often a challenging task for digital protocols). They also have didactic values and can be used to teach the concept of ZKP to non-experts.

Consider a verification of the following condition. An undirected graph $G$ is known to both $P$ and $V$, but only $P$ knows a subgraph $H$ of $G$. Without revealing any information about $H$, $P$ wants to convince $V$ that $H$ is a connected spanning subgraph of $G$, i.e. $H$ is connected and contains all vertices of $G$.

A ZKP to verify the connected spanning subgraph condition is important because this condition is a part of many well-known NP-complete problems, such as the Hamiltonian cycle problem, the maximum leaf spanning tree problem, and a famous logic puzzle called \textit{Bridges}. To verify solutions of these problems, $P$ needs to show that his/her solution satisfies the connected spanning subgraph condition as well as some other conditions (which are relatively easier to show).

\subsection{Related Work}
Most of previous work in physical ZKPs aimed to verify a solution of popular logic puzzles: Sudoku \cite{sudoku0,sudoku}, Nonogram \cite{nonogram}, Akari \cite{akari}, Kakuro \cite{akari,kakuro}, KenKen \cite{akari}, Takuzu \cite{akari,takuzu}, Makaro \cite{makaro}, Norinori \cite{norinori}, Slitherlink \cite{slitherlink}, Juosan \cite{takuzu}, Numberlink \cite{numberlink}, Suguru \cite{suguru}, Ripple Effect \cite{ripple}, Nurikabe \cite{nurikabe}, and Hitori \cite{nurikabe}.

The theoretical contribution of these protocols is that they employ novel methods to physically verify specific functions. For example, a subprotocol in \cite{makaro} verifies that a number in a list is the largest one in that list without revealing any value in the list, and a subprotocol in \cite{sudoku0} verifies that a list is a permutation of all given numbers without revealing their order.

Some of these protocols can verify graph theoretic problems. For example, a protocol in \cite{numberlink} verifies a solution of the $k$ vertex-disjoint paths problem, i.e. a set of $k$ vertex-disjoints paths joining each of the $k$ given pairs of endpoints in a graph. In a recent work, a subprotocol in \cite{nurikabe} also verifies a condition related to connectivity. However, their protocol only works in a grid graph and also deals with a different condition from the one considered in this paper. (Their protocol only verifies that the selected cells on a board are connected together, not as a spanning subgraph of the whole board.)

\subsection{Our Contribution}
In this paper, we propose a physical card-based ZKP with perfect completeness and soundness to verify that a subgraph $H$ is a connected spanning subgraph of an undirected graph $G$ without revealing $H$.

We also show three possible applications of this protocol: verifying a Hamiltonian cycle in an undirected graph, verifying the existence of a spanning tree with at least $k$ leaves in an undirected graph, and verifying a solution of the Bridges puzzle.

\section{Preliminaries}
Each \textit{encoding card} used in our protocol has either $\clubsuit$ or $\heartsuit$ on the front side. All cards have indistinguishable back sides.

For $0 \leq x < k$, define $E_k(x)$ to be a sequence of consecutive $k$ cards, with all of them being \mybox{$\clubsuit$} except the $(x+1)$-th card from the left being \mybox{$\heartsuit$}, e.g. $E_3(0)$ is \mbox{\mybox{$\heartsuit$}\mybox{$\clubsuit$}\mybox{$\clubsuit$}} and $E_4(2)$ is \mbox{\mybox{$\clubsuit$}\mybox{$\clubsuit$}\mybox{$\heartsuit$}\mybox{$\clubsuit$}}. We use $E_k(x)$ to encode an integer $x$ in $\mathbb{Z}/k\mathbb{Z}$. This encoding rule was introduced by Shinagawa et al. \cite{polygon}.

The cards in $E_k(x)$ are arranged horizontally as defined above unless stated otherwise. In some situations, however, we may arrange the cards vertically, where the leftmost card becomes the topmost card and the rightmost card becomes the bottommost card.

In an $m \times k$ \textit{matrix} of cards, let Row $i$ denote the $i$-th topmost row and Column $j$ denote the $j$-th leftmost column.

\subsection{Pile-Shifting Shuffle} \label{shift}
A \textit{pile-shifting shuffle} on an $m \times k$ matrix shifts the columns of the matrix by a random cyclic shift, i.e. shifts the columns cyclically to the right by $r$ columns for a uniformly random $r \in \mathbb{Z}/k\mathbb{Z}$ unknown to all parties.

This protocol was developed by Shinagawa et al. \cite{polygon}. It can be implemented in real world by putting the cards in each column into an envelope and applying several \textit{Hindu cuts} to the sequence of envelopes \cite{hindu}.

\subsection{Sequence Selection Protocol} \label{select}
Suppose we have $k$ sequences $A_0$, $A_1$, ..., $A_{k-1}$, each encoding an integer in $\mathbb{Z}/m\mathbb{Z}$, and a sequence $B$ encoding an integer $b$ in $\mathbb{Z}/k\mathbb{Z}$. We propose the following \textit{sequence selection protocol}, which allows us to select a sequence $A_b$ (to be used as an input in other protocols) without revealing $b$.

\begin{figure}
\centering
\begin{tikzpicture}
\node at (0.0,-1.1) {$A_0$};
\node at (0.5,-1.1) {$A_1$};
\node at (1.0,-1.1) {...};
\node at (1.7,-1.1) {$A_{k-1}$};

\draw[->] (0.0,-0.4) -- (0.0,-0.8);
\draw[->] (0.5,-0.4) -- (0.5,-0.8);
\draw[->] (1.5,-0.4) -- (1.5,-0.8);

\node at (0.0,0.0) {\mybox{?}};
\node at (0.5,0.0) {\mybox{?}};
\node at (1.0,0.0) {...};
\node at (1.5,0.0) {\mybox{?}};

\node at (0.0,0.7) {\vdots};
\node at (0.5,0.7) {\vdots};
\node at (1.0,0.7) {\vdots};
\node at (1.5,0.7) {\vdots};

\node at (0.0,1.2) {\mybox{?}};
\node at (0.5,1.2) {\mybox{?}};
\node at (1.0,1.2) {...};
\node at (1.5,1.2) {\mybox{?}};

\node at (0.0,1.8) {\mybox{?}};
\node at (0.5,1.8) {\mybox{?}};
\node at (1.0,1.8) {...};
\node at (1.5,1.8) {\mybox{?}};

\node at (0,2.7) {\mybox{?}};
\node at (0.5,2.7) {\mybox{?}};
\node at (1,2.7) {...};
\node at (1.5,2.7) {\mybox{?}};
\draw[->] (1.8,2.7) -- (2.2,2.7);
\node at (2.4,2.7) {$B$};

\node at (0.0,3.3) {\mybox{?}};
\node at (0.5,3.3) {\mybox{?}};
\node at (1.0,3.3) {...};
\node at (1.5,3.3) {\mybox{?}};
\draw[->] (1.8,3.3) -- (2.2,3.3);
\node at (2.7,3.3) {$E_k(0)$};

\draw[] (-0.3,-0.3) -- (-0.3,4.7);
\draw[] (-2.1,3.7) -- (1.8,3.7);

\node at (-0.9,0.0) {$m+2$};
\node at (-0.6,0.7) {\vdots};
\node at (-0.6,1.2) {4};
\node at (-0.6,1.8) {3};
\node at (-0.6,2.7) {2};
\node at (-0.6,3.3) {1};
\node at (-1.6,1.65) {Row};

\node at (0.0,4.0) {1};
\node at (0.5,4.0) {2};
\node at (1.0,4.0) {...};
\node at (1.5,4.0) {$k$};
\node at (0.75,4.5) {Column};
\end{tikzpicture}
\caption{An $(m+2) \times k$ matrix $M$ constructed in Step 1}
\label{fig2}
\end{figure}
\begin{enumerate}
	\item Construct the following $(m+2) \times k$ matrix $M$ (see Fig. \ref{fig2}).
	\begin{enumerate}
		\item In Row 1, place a sequence $E_k(0)$. In Row 2, place the sequence $B$.
		\item In each Column $j=1,2,...,k$, place the sequence $A_{j-1}$ arranged vertically from Row 3 to Row $m+2$.
	\end{enumerate}
	\item Apply the pile-shifting shuffle to $M$.
	\item Turn over all cards in Row 2. Locate the position of a \mybox{$\heartsuit$}. Suppose it is at Column $j$.
	\item Select the sequence in Column $j$ arranged vertically from Row 3 to Row $m+2$. This is the sequence $A_b$ as desired. Turn over all face-up cards.
\end{enumerate}

After we are done using $A_b$ in other protocols, we can put $A_b$ back into $M$, apply the pile-shifting shuffle to $M$, then turn over all cards in Row 1 and shift the columns of $M$ cyclically such that the \mybox{$\heartsuit$} in Row 1 moves to Column 1. This reverts the matrix back to its original position, so we can reuse the sequences $A_0$, $A_1$, ..., $A_{k-1}$, and $B$.

\subsection{Enhanced Matrix}
In addition to the encoding cards, we also use \textit{marking cards}, each having a positive integer on the front side. All cards have indistinguishable back sides.

Starting from an $m \times k$ matrix of face-down encoding cards, place face-down marking cards \mybox{1}, \mybox{2}, ..., \mybox{$k$} from left to right on top of Row 1; this new row is called Row 0. Then, place face-down marking cards \mybox{2}, \mybox{3}, ..., \mybox{$m$} from top to bottom (starting at Row 2) to the left of Column 1; this new column is called Column 0. We call this structure an $m \times k$ enhanced matrix (see Fig. \ref{fig3}).

\begin{figure}
\centering
\begin{tikzpicture}
\node at (0.0,0.6) {\mybox{?}};
\node at (0.5,0.6) {\mybox{?}};
\node at (1.0,0.6) {\mybox{?}};
\node at (1.5,0.6) {\mybox{?}};
\node at (2.0,0.6) {\mybox{?}};

\node at (0.0,1.2) {\mybox{?}};
\node at (0.5,1.2) {\mybox{?}};
\node at (1.0,1.2) {\mybox{?}};
\node at (1.5,1.2) {\mybox{?}};
\node at (2.0,1.2) {\mybox{?}};

\node at (0.0,1.8) {\mybox{?}};
\node at (0.5,1.8) {\mybox{?}};
\node at (1.0,1.8) {\mybox{?}};
\node at (1.5,1.8) {\mybox{?}};
\node at (2.0,1.8) {\mybox{?}};

\node at (0.0,2.4) {\mybox{?}};
\node at (0.5,2.4) {\mybox{?}};
\node at (1.0,2.4) {\mybox{?}};
\node at (1.5,2.4) {\mybox{?}};
\node at (2.0,2.4) {\mybox{?}};

\node at (0.0,3.2) {\mybox{1}};
\node at (0.5,3.2) {\mybox{2}};
\node at (1.0,3.2) {\mybox{3}};
\node at (1.5,3.2) {\mybox{4}};
\node at (2.0,3.2) {\mybox{5}};
\node at (4.1,3.2) {(actually face-down)};

\node at (-0.7,0.6) {\mybox{4}};
\node at (-0.7,1.2) {\mybox{3}};
\node at (-0.7,1.8) {\mybox{2}};
\draw[->] (-0.7,0.0) -- (-0.7,0.3);
\draw[] (-0.7,0.0) -- (-0.2,0.0);
\node at (1.6,0.0) {(actually face-down)};

\draw[] (-1.0,0.3) -- (-1.0,4.7);
\draw[] (-2.5,3.6) -- (2.7,3.6);

\node at (-1.3,0.6) {4};
\node at (-1.3,1.2) {3};
\node at (-1.3,1.8) {2};
\node at (-1.3,2.4) {1};
\node at (-1.3,3.2) {0};
\node at (-2.0,1.5) {Row};

\node at (-0.7,3.9) {0};
\node at (0.0,3.9) {1};
\node at (0.5,3.9) {2};
\node at (1.0,3.9) {3};
\node at (1.5,3.9) {4};
\node at (2.0,3.9) {5};
\node at (1.0,4.4) {Column};
\end{tikzpicture}
\caption{An example of a $4 \times 5$ enhanced matrix}
\label{fig3}
\end{figure}

\subsection{Double-Scramble Shuffle} \label{shuffle}
In a \textit{double-scramble shuffle} on an $m \times k$ enhanced matrix, first rearrange Columns $1,2,...,k$ (including the marking cards in Row 0) by a uniformly random permutation unknown to all parties (which can be implemented by putting the cards in each column into an envelope and scrambling all envelopes together). Then, leave Row 1 as it is and rearrange Rows $2,3,...,m$ (including the marking cards in Column 0) by a uniformly random permutation unknown to all parties. This protocol was developed by Ruangwises and Itoh \cite{numberlink}.

\subsection{Rearrangement Protocol} \label{rearrange2}
A \textit{rearrangement protocol} reverts the rows and columns of an enhanced matrix (after we perform double scramble shuffles) back to their original positions so that we can reuse the cards without revealing them. This protocol was developed by Ruangwises and Itoh \cite{numberlink}, although slightly different protocols with the same idea were also used in other previous work \cite{makaro,revert1,revert2,ripple,sudoku}.

In the rearrangement protocol on an $m \times k$ enhanced matrix, first apply the double-scramble shuffle to the matrix. Then, turn over all marking cards in Row 0 and rearrange the columns such that each marking card with number $i$ will be in Column $i$. Analogously, turn over all marking cards in Column 0 and rearrange Rows $2,3,...,m$ accordingly.

\subsection{Neighbor Counting Protocol} \label{count}
Suppose we have an $m \times k$ matrix with each row encoding an integer in $\mathbb{Z}/k\mathbb{Z}$. A \textit{neighbor counting protocol} allows us to count the number of indices $i \geq 2$ such that Row $i$ encodes the same integer as Row 1, without revealing any other information. This protocol was developed by Ruangwises and Itoh \cite{numberlink}.

\begin{enumerate}
	\item Place marking cards to make the matrix become an $m \times k$ enhanced matrix.
	\item Apply the double-scramble shuffle.
	\item Turn over all encoding cards in Row 1. Locate the position of a \mybox{$\heartsuit$}. Suppose it is at Column $j$.
	\item Turn over all encoding cards in Column $j$. Count the number of \mybox{$\heartsuit$}s besides the one in Row 1. This is the number of indices that we want to know.
	\item Turn over all face-up cards. Apply the rearrangement protocol.
\end{enumerate}

\section{Verifying an Undirected Path} \label{path}
In this section, we will explain a \textit{path verification protocol}, which verifies the existence of an undirected path between vertices $s$ and $t$ in an undirected graph $G$. It is a special case $k=1$ of the protocol for the $k$ vertex-disjoint paths problem developed by Ruangwises and Itoh \cite{numberlink}.\footnote{Although the $k$ vertex-disjoint paths problem is NP-complete when $k$ is a part of the input, the special case $k=1$ in solvable in linear time. Hence, this protocol is actually unnecessary since $V$ can easily verifies existence of the path by him/herself given $G$. However, we explain the details of this protocol in order to show its idea, which will be modified and used in our main protocol in Section \ref{main}.}

We call $s$ and $t$ \textit{terminal vertices}, and other vertices \textit{non-terminal vertices}. We call a path $(v_1,v_2,...,v_\ell)$ \textit{minimal} if there are no neighboring vertices $v_i$ and $v_j$ such that $|i-j|>1$. Observe that given any path between $s$ and $t$, one can modify it to become a minimal one in linear time, so we can assume that $P$ knows a minimal path between $s$ and $t$.

Let $d$ be the maximum degree of a vertex in $G$. In linear time, we can color the vertices of $G$ with at most $d+1$ colors such that there are no neighboring vertices with the same color. This $(d+1)$-coloring is known to all parties.

On each terminal vertex $v$, $P$ publicly places a sequence $E_{d+2}(0)$. On each non-terminal vertex $v$ with the $x$-th color, $P$ secretly places a sequence $E_{d+2}(0)$ if $v$ is on $P$'s path, or a sequence $E_{d+2}(x)$ if $v$ is not on the path. Let $A(v)$ denote the sequence on each vertex $v$. Since the path is minimal, every non-terminal cell on the path has exactly two neighbors with a sequence encoding the same number as it (which is 0), while every terminal cell has exactly one such neighbor. On the other hand, every non-terminal cell not on the path has no neighbor with a sequence encoding the same number as it.

The idea is that, for every vertex $v$ with the $x$-th color, $P$ will add two ``artificial neighbors'' of $v$, both having $E_{d+2}(x)$ on it, and show that
\begin{enumerate}
	\item every non-terminal vertex $v$ (both on and not on the path) has exactly two neighbors with a sequence encoding the same number as $A(v)$, and
	\item every terminal vertex $v$ has exactly one neighbor with a sequence encoding the same number as $A(v)$.
\end{enumerate}

Formally, to verify each non-terminal (resp. terminal) vertex $v$ with the $x$-th color and with degree $d_v$, $P$ performs the following steps.
\begin{enumerate}
	\item Construct the following $(d_v+3) \times (d+2)$ matrix $M$.
	\begin{enumerate}
		\item In Row 1, place $A(v)$.
		\item In each of the next $d_v$ rows, place $A(v')$ for each neighbor $v'$ of $v$.
		\item In each of the last two rows, place $E_{d+2}(x)$.
	\end{enumerate}
	\item Apply the neighbor counting protocol to $M$. $V$ verifies that there are exactly two rows (resp. one row) encoding the same integer as Row 1.
	\item Put the sequences back to their corresponding vertices.
\end{enumerate}

If every vertex in $G$ passes the verification, then $V$ accepts.

\section{Verifying a Connected Spanning Subgraph} \label{main}
We get back to our main problem. Let $v_1,v_2,...,v_n$ be the vertices in $G$. In order to prove that $H$ is a connected spanning subgraph of $G$, it is sufficient to show that there is an undirected path between $v_i$ and $v_n$ in $H$ for every $i=1,2,...,n-1$.

Note that the path verification protocol in Section \ref{path} verifies a path between $s$ and $t$ in a graph $G$, where $G$ is known to all parties. In this section, we will modify that protocol so that it can verify a path between $s$ and $t$ in a subgraph $H$ of $G$, where $H$ is known to only $P$. Then, $P$ will perform the modified protocol for $n-1$ rounds, with $s=v_i$ and $t=v_n$ in each $i$-th round.

At the beginning, $P$ secretly places a sequence $B(e)$ on every edge $e \in G$ to indicate whether $e \in H$. ($B(e)$ is $E_2(1)$ if $e \in H$ and is $E_2(0)$ if $e \notin H$.) By doing this, the graph $H$ is committed and cannot be changed later.

Let $d$ be the maximum degree of a vertex in $G$. Like in the path verification protocol, consider a $(d+1)$-coloring, known to all parties, such that there are no neighboring vertices with the same color.

On every vertex $v$, $P$ publicly places a sequence $A_0(v)$, which is $E_{d+3}(d+2)$. $A_0(v)$ acts as a ``blank sequence'' guaranteed to be different from $A_1(v')$ on any vertex $v'$ during any round, which will be defined in the next step.

During each $i$-th round when $P$ wants to show that there is a path in $H$ between $s=v_i$ and $t=v_n$. First, $P$ selects a minimal path between $s$ and $t$ in $H$. On each terminal vertex $v$, $P$ publicly places a sequence $A_1(v)$, which is $E_{d+3}(0)$. On each non-terminal vertex $v$ with the $x$-th color, $P$ secretly places a sequence $A_1(v)$, which is $E_{d+3}(0)$ if $v$ is on the path and is $E_{d+3}(x)$ if $v$ is not on the path. Note that unlike $A_0(v)$ which remains the same throughout the whole protocol, $A_1(v)$ is changed in every round since it depends on the path selected in each round.

The verification steps are similar to the path verification protocol, except that in Step 1(b), $P$ first applies the sequence selection protocol in Section \ref{select} to determine whether to choose $A_0(v')$ or $A_1(v')$ for each neighbor $v'$ of $v$, depending on whether an edge $e$ between $v$ and $v'$ is in $H$ or not. The idea is that if $e \in H$, then $v'$ is still $v$'s neighbor in $H$, so $P$ chooses a sequence $A_1(v')$ and the rest works the same way as in the path verification protocol. On the other hand, if $e \notin H$, then $v'$ is not $v$'s neighbor in $H$, so $P$ chooses a sequence $A_0(v')$ which is guaranteed to be different from $A_1(v)$.

Formally, to verify each non-terminal (resp. terminal) vertex $v$ with the $x$-th color and with degree $d_v$, $P$ performs the following steps.
\begin{enumerate}
	\item Construct the following $(d_v+3) \times (d+3)$ matrix $M$.
	\begin{enumerate}
		\item In Row 1, place $A_1(v)$.
		\item For each neighbor $v'$ of $v$, let $e$ be an edge between $v$ and $v'$, and let $b$ be a bit encoded by $B(e)$. Apply the sequence selection protocol to choose a sequence $A_b(v')$ and place it in the next row of $M$. Repeatedly perform this for every neighbor of $v$ to fill the next $d_v$ rows.
		\item In each of the last two rows of $M$, place $E_{d+3}(x)$.
	\end{enumerate}
	\item Apply the neighbor counting protocol to $M$. $V$ verifies that there are exactly two rows (resp. one row) encoding the same integer as Row 1.
	\item Put the sequences back to their corresponding vertices.
\end{enumerate}

If every vertex in $G$ passes the verification, then $V$ accepts.

This protocol uses $2(d+3)(2n+2)+2d+2m$ encoding cards and $2d+5$ marking cards, where $n$ and $m$ are the numbers of vertices and edges of $G$, respectively, and $d$ is the maximum degree of a vertex in $G$. Therefore, the total number of required cards is $\Theta(dn)$.

\section{Proof of Correctness and Security}
We will prove the perfect completeness, perfect soundness, and zero-knowledge properties of our main protocol in Section \ref{main}.

\begin{lemma}[Perfect completeness] \label{lem1}
If $H$ is a connected spanning subgraph of $G$, then $V$ always accepts.
\end{lemma}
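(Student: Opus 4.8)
The plan is to show that when $H$ is genuinely a connected spanning subgraph, every single vertex verification in every round succeeds, so $V$ accepts. The key reduction is already given in Section~\ref{main}: to prove $H$ is a connected spanning subgraph it suffices to exhibit, for each $i=1,\dots,n-1$, an undirected path in $H$ between $v_i$ and $v_n$. Since $H$ is connected and contains every vertex, such a path exists in $H$ for each $i$, and by the remark in Section~\ref{path} it can be taken to be minimal. So first I would fix an arbitrary round $i$, let $s=v_i$, $t=v_n$, and let $P$ use a minimal path between them in $H$; the whole lemma then reduces to verifying that, for this fixed correct path, the count obtained at each vertex is exactly the value $V$ expects (two for non-terminal, one for terminal).

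Next I would verify that the sequence selection step places the correct sequence in each row. For a neighbor $v'$ of $v$ joined by edge $e$, the committed bit $b$ encoded by $B(e)$ is $1$ iff $e\in H$, and by correctness of the sequence selection protocol (Section~\ref{select}) the row receives $A_b(v')$: namely $A_1(v')$ when $e\in H$ and the blank sequence $A_0(v')=E_{d+3}(d+2)$ when $e\notin H$. Thus Row~1 holds $A_1(v)$, each neighbor row holds $A_1(v')$ for the $H$-neighbors and a blank otherwise, and the last two rows hold $E_{d+3}(x)$ where $x$ is $v$'s color.

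The heart of the argument is the counting claim: I would show that the number of rows (other than Row~1) encoding the same integer as $A_1(v)$ equals $2$ for a non-terminal vertex and $1$ for a terminal vertex. There are two cases. If $v$ is \emph{on} the path, then $A_1(v)=E_{d+3}(0)$; the two artificial rows $E_{d+3}(x)$ do not match (since $x\neq 0$ because $v$ carries a genuine color and $0$ is reserved), the blank rows $E_{d+3}(d+2)$ do not match, and among the true $H$-neighbors exactly those also on the path carry $E_{d+3}(0)$. Because the path is minimal, a non-terminal on-path vertex has exactly two path-neighbors in $H$ and a terminal has exactly one, giving the required counts. If $v$ is a non-terminal vertex \emph{not} on the path, then $A_1(v)=E_{d+3}(x)$; now the two artificial rows $E_{d+3}(x)$ match, while no genuine neighbor row can match since $H$-neighbors have colors different from $x$ and carry either $E_{d+3}(0)$ or $E_{d+3}(x')$ with $x'\neq x$, and blanks use $d+2\neq x$—so the count is exactly $2$, again as required. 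Invoking correctness of the neighbor counting protocol (Section~\ref{count}), $V$ reads off precisely these counts.

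The main obstacle I anticipate is bookkeeping the color arithmetic in the off-path non-terminal case: one must confirm that the artificial neighbors' value $x$ is never accidentally matched by a real neighbor. This is exactly where the proper $(d+1)$-coloring is used—adjacent vertices have distinct colors, so no $H$-neighbor of a color-$x$ vertex can itself encode $x$—together with the fact that the reserved values $0$ (on-path/terminal) and $d+2$ (blank) are disjoint from the genuine color indices $1,\dots,d+1$. Once these value-disjointness facts are stated cleanly, every vertex passes its check in every round, so $V$ always accepts.
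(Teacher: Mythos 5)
Your proposal is correct and follows essentially the same route as the paper: reduce each round to verifying a minimal $v_i$--$v_n$ path in $H$, use correctness of the sequence selection protocol to show each neighbor row receives $A_1(v')$ or the blank $A_0(v')=E_{d+3}(d+2)$, and observe that the blanks never match $A_1(v)$ so the counts are unaffected. The only difference is one of delegation: the paper proves the sequence selection correctness inline but then cites the completeness of the path verification protocol from the Numberlink paper for the final counting step, whereas you assume the former and instead spell out the counting case analysis (on-path vs.\ off-path, using minimality and the properness of the $(d+1)$-coloring) explicitly --- which is a valid, self-contained substitute for that citation.
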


\begin{proof}
Suppose that $H$ is a connected spanning subgraph of $G$, then there exists a path between $v_i$ and $v_n$ in $H$ for every $i=1,2,...,n-1$.

First, we will prove the correctness of the sequence selection protocol in Section \ref{select}. Since $B$ encodes the number $b$, when placing $B$ in Row 2, the \mybox{$\heartsuit$} will be at Column $b+1$, the same column as the sequence $A_b$. After applying the pile-shifting shuffle, they will still be at the same column, so the sequence we get in Step 4 will be $A_b$.

Now consider the main protocol in each $i$-th round. In Step 1(b), $P$ always selects a sequence $A_1(v')$ if $e \in H$ and $A_0(v')$ if $e \notin H$. Since $A_0(v')$ is $E_{d+3}(d+2)$ and thus is different from $A_1(v)$, adding $A_0(v')$ to a new row of $M$ does not increase the number of rows encoding the same integer as Row 1. Therefore, the result will remain the same even if in Step 1(b) $P$ adds only the sequences on the vertices such that $e \in H$, which is equivalent to solely applying the path verification protocol in Section \ref{path} to verify a path between $v_i$ and $v_n$ on $H$.

The perfect completeness property of the path verification protocol has been proved in \cite{numberlink}, so we can conclude that $V$ always accepts.
\end{proof}

\begin{lemma}[Perfect soundness] \label{lem2}
If $H$ is not a connected spanning subgraph of $G$, then $V$ always rejects.
\end{lemma}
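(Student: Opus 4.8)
The plan is to prove the contrapositive structure directly: if $H$ is \emph{not} a connected spanning subgraph of $G$, then there is some index $i \in \{1,2,\dots,n-1\}$ for which no path between $v_i$ and $v_n$ exists in $H$, and I will show that the $i$-th round of the protocol must cause $V$ to reject. First I would recall that, by the reduction set up in Section~\ref{main}, $H$ being a connected spanning subgraph is equivalent to the existence of a path between $v_i$ and $v_n$ in $H$ for every $i$; hence failure of the spanning-connectivity condition yields at least one round $i$ in which $P$ cannot supply a genuine $s$--$t$ path in $H$.

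The core of the argument is to show that in that bad round, no matter what sequences $P$ secretly commits to, some vertex fails its neighbor-counting check. The key observation, inherited from the completeness proof, is that the edge commitments $B(e)$ were fixed at the very start and cannot be changed, so during the sequence selection step the sequence $A_{b}(v')$ that gets pulled into row of $M$ is forced: it is $A_1(v')$ exactly when $e \in H$ and $A_0(v')$ otherwise. Since $A_0(v')$ is the ``blank'' sequence $E_{d+3}(d+2)$, guaranteed distinct from every $A_1(v)$, the contribution of non-$H$ neighbors to the neighbor count is always zero. Thus the aggregate of all per-vertex checks is exactly equivalent to running the path verification protocol of Section~\ref{path} on the subgraph $H$ alone (with $G$ replaced by $H$), regardless of $P$'s cheating attempts.

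Having reduced to the path verification protocol on $H$, I would then invoke its perfect soundness, which was established in~\cite{numberlink}: if there is genuinely no $s$--$t$ path in $H$, then the conjunction of the constraints—every non-terminal vertex has exactly two same-numbered neighbors and every terminal vertex exactly one—cannot be satisfied by any assignment $A_1(\cdot)$ that $P$ chooses. Concretely, the degree conditions force the ``marked'' vertices (those assigned $E_{d+3}(0)$) to form a disjoint union of paths and cycles whose only path endpoints are $s$ and $t$; if this configuration existed it would exhibit an $s$--$t$ path in $H$, contradicting the assumption. Therefore at least one vertex's neighbor-counting check returns the wrong count, and $V$ rejects in that round.

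The main obstacle I anticipate is making rigorous the claim that $P$'s commitments are genuinely binding in the adversarial setting: I must argue that once $B(e)$ and the publicly-placed sequences are down, $P$ has no way to alter which sequence is selected in Step~1(b), so that the reduction to the honest path verification protocol is valid even against a cheating prover. This requires carefully checking that the sequence selection protocol in Section~\ref{select} is itself perfectly sound—that it always returns precisely $A_b$ for the committed bit $b$—and that $P$ cannot exploit the repeated reuse of the committed sequences across the $n-1$ rounds to sneak in an inconsistent assignment. Once this binding property is pinned down, the remainder follows by directly citing the soundness of the path verification protocol.
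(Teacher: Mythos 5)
Your proposal is correct and follows essentially the same route as the paper: identify a round $i$ with no $v_i$--$v_n$ path in $H$, argue that the committed $B(e)$'s force the sequence selection so that the round reduces to the path verification protocol run on $H$, and then invoke the perfect soundness of that protocol from~\cite{numberlink}. The extra detail you supply (the blank-sequence argument and the binding of the commitments) is a more careful version of what the paper compresses into a reference back to Lemma~\ref{lem1}.
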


\begin{proof}
Suppose that $H$ is not a connected spanning subgraph of $G$, then there exists an index $i \in \{1,2,...,n-1\}$ such that there is no path between $v_i$ and $v_n$ in $H$. In Lemma \ref{lem1}, we have proved that the sequence selection protocol is correct, and the $i$-th round of the main protocol is equivalent to applying the path verification protocol to verify a path between $v_i$ and $v_n$ on $H$.

The perfect soundness property of the path verification protocol has been proved in \cite{numberlink}, so we can conclude that $V$ always rejects.
\end{proof}

\begin{lemma}[Zero-knowledge] \label{lem3}
During the verification, $V$ learns nothing about $H$.
\end{lemma}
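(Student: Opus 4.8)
The plan is to prove zero-knowledge by constructing a simulator $S$ that, with black-box access to $V$ but without access to $P$'s secret subgraph $H$, produces a transcript whose probability distribution is identical to that of the real interaction. Since the protocol is non-interactive from $V$'s perspective except for observing face-up cards during shuffles, it suffices to argue that every card revealed during the protocol (in the sequence selection protocol, the neighbor counting protocol, and the rearrangement protocol) is uniformly distributed and independent of $H$, so that $S$ can sample these reveals from the correct fixed distribution.

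First I would enumerate every point in the main protocol where cards become face-up and must be simulated. These are: (i) in the sequence selection protocol, turning over Row 2 to locate a single $\heartsuit$ among $k$ columns after a pile-shifting shuffle, and later turning over Row 1 during the reversion; (ii) in the neighbor counting protocol, turning over Row 1 to locate a $\heartsuit$ after the double-scramble shuffle, and turning over the identified Column $j$ to count $\heartsuit$s; and (iii) in the rearrangement protocol, turning over the marking cards in Row 0 and Column 0. For each, I would show the revealed configuration is independent of $H$. The key observations are: after a pile-shifting shuffle the cyclic offset $r$ is uniform in $\mathbb{Z}/k\mathbb{Z}$, so the column position of the lone $\heartsuit$ in Row 2 is uniform over $\{1,\dots,k\}$ regardless of $b$; after the double-scramble shuffle the position of the $\heartsuit$ in Row 1 is uniform over the $k$ columns, and the set of rows (other than Row 1) carrying a $\heartsuit$ in the revealed column is a uniformly random subset of the appropriate fixed size, revealing only the count that $V$ is entitled to learn; and the marking cards revealed in the rearrangement protocol form a uniformly random permutation of $\{1,\dots,k\}$ (resp.\ $\{2,\dots,m\}$) by construction of the double-scramble shuffle.

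Given these uniformity facts, the simulator $S$ proceeds by running $V$ as a black box and, at each reveal step, sampling the face-up outcome from the corresponding fixed distribution: a uniformly random column for each $\heartsuit$ location, a uniformly random subset of rows of the correct cardinality for the neighbor count (where the count itself is determined by the publicly known graph structure and degree, not by $H$), and uniformly random permutations for the marking cards. Since each real reveal is provably independent of $H$ and matches the sampled distribution, the simulated transcript is identically distributed to the real one. I would also note that the number of rounds, the sizes of all matrices, and the number of face-up cards at each step depend only on $G$, the $(d+1)$-coloring, and the degrees $d_v$, all of which are public, so $S$ needs no information about $H$ to schedule its sampling.

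The main obstacle I expect is the neighbor counting step, where one entire column is revealed: I must argue carefully that although $V$ learns how many neighbors of $v$ carry the same encoded integer as $A_1(v)$, this count is exactly the publicly-prescribed value (two for non-terminal, one for terminal vertices when $H$ is valid) and that the particular rows in which the matching $\heartsuit$s appear are scrambled by the double-scramble shuffle into a uniformly random subset, thereby leaking no information about which physical neighbors lie on the path or belong to $H$. Establishing that the joint distribution of the Row-1 $\heartsuit$ position together with the revealed column's $\heartsuit$ pattern is independent of $H$ — rather than merely each marginal — is the delicate point, and I would handle it by appealing to the independence of the column permutation and the row permutation in the double-scramble shuffle.
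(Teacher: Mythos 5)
Your proposal is correct and follows essentially the same approach as the paper's proof: enumerate every step at which cards are turned face-up, argue that each revealed configuration is uniformly distributed (and hence independent of $H$) thanks to the pile-shifting and double-scramble shuffles, and conclude that a simulator can sample these reveals without knowing $H$. You are in fact slightly more thorough than the paper, which only treats the reveals in the sequence selection and neighbor counting protocols and leaves the rearrangement-protocol and reversion-step reveals, as well as the joint-distribution point for the revealed column, implicit.
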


\begin{proof}
To prove the zero-knowledge property, it is sufficient to prove that all distributions of the values that appear when the cards are turned face-up can be simulated by a simulator $S$ without knowing $H$.

\begin{itemize}
	\item In the sequence selection protocol:
	\begin{itemize}
		\item In Step 3, we turn over all cards in Row 2. This occurs right after a pile-shifting shuffle is applied to $M$. Hence, the \mybox{$\heartsuit$} has an equal probability to be at each of the $k$ columns, so this step can be simulated by $S$.
	\end{itemize}
	
	\item In the neighbor counting protocol:
	\begin{itemize}
		\item In Step 3, we turn over all encoding cards in Row 1. The order of Columns $1,2,...,k$ is uniformly distributed among all possible permutations due to the double-scramble shuffle. Hence, the \mybox{$\heartsuit$} has an equal probability to be at each of the $k$ columns, so this step can be simulated by $S$.
		\item In Step 4, we turn over all encoding cards in Column $j$. Suppose there are $t$ \mybox{$\heartsuit$}s besides the one in Row 1 ($t$ is now a public information). The order of Rows $2,3,...,m$ is uniformly distributed among all possible permutations due to the double-scramble shuffle. Hence, all $t$ \mybox{$\heartsuit$}s have an equal probability to be at each of the $\binom{m-1}{t}$ combinations of rows, so this step can be simulated by $S$.
	\end{itemize}
\end{itemize}

Therefore, we can conclude that $V$ learns nothing about $H$.
\end{proof}

\section{Applications to NP-Complete Problems} \label{app}
\subsection{Hamiltonian Cycle Problem} \label{hamiltonian}
Given an undirected graph $G$, determining whether $G$ has a Hamiltonian cycle (a cycle that visits each vertex exactly once) is known to be NP-complete \cite{book}. Suppose $P$ knows a Hamiltonian cycle $H$ of $G$ and wants to convince $V$ that $G$ has a Hamiltonian cycle without revealing any information about $H$.

To prove that $H$ is a Hamiltonian cycle of $G$, it is sufficient to show that
\begin{enumerate}
	\item $H$ is a connected spanning subgraph of $G$, and
	\item every vertex in $H$ has degree 2.
\end{enumerate}

At the beginning, $P$ commits $H$ by secretly placing a sequence $B(e)$ on every edge $e \in G$ to indicate whether $e \in H$. ($B(e)$ is $E_2(1)$ if $e \in H$ and is $E_2(0)$ if $e \notin H$.) The first condition can be verified by the protocol in Section \ref{main}.

To verify the second condition, $P$ first applies the copy protocol explained in Appendix \ref{copy} to make another copy of a sequence $B(e)$ on every edge $e$. (Each of the two copies will be used to verify each endpoint of $e$.) For each vertex $v \in H$, $P$ considers one (unused) copy of a sequence on every edge $e$ incident to $v$ and selects only the leftmost card of it (which is \mybox{$\clubsuit$} if $e \in H$ and is \mybox{$\heartsuit$} if $e \notin H$). Then, $P$ scrambles all selected cards together and turns over all of them, and $V$ verifies that there are exactly two \hbox{\mybox{$\clubsuit$}s} among them (which means $v$ has degree 2 in $H$). $V$ accepts if the verification passes for every vertex in $H$. This protocol also uses $\Theta(dn)$ cards.\footnote{There is an alternative way to verify a Hamiltonian cycle: $P$ publicly constructs an $n \times n$ adjacency matrix $M$ of $G$, then privately selects a permutation $\sigma$ and rearranges both the rows and columns of $M$ by $\sigma$. Finally, $P$ turns over all cards in the form $M(i,i+1)$ and $M(i,i-1)$ to show that they are all 1s. This protocol is simpler and more straightforward, but it requires $\Theta(n^2)$ cards, which is significantly greater than our protocol in sparse graphs.}

\subsection{Maximum Leaf Spanning Tree Problem} \label{tree}
Given an undirected graph $G$ and an integer $k$, the decision version of the maximum leaf spanning tree problem asks whether $G$ has a spanning tree with at least $k$ leaves (vertices with degree 1). This problem is also known to be NP-complete \cite{book}. Suppose $P$ knows a spanning tree $H$ of $G$ with at least $k$ leaves and wants to convince $V$ that the such tree exists without revealing any information about $H$.

To prove that $G$ has a spanning tree with at least $k$ leaves, it is sufficient to show that
\begin{enumerate}
	\item $H$ is a connected spanning subgraph of $G$, and
	\item $H$ has at least $k$ leaves.
\end{enumerate}
Note that it is not necessary to show that $H$ itself is a tree. (Even if $H$ itself is not a tree, any spanning tree of $H$ will also be a spanning tree of $G$, and every leaf of $H$ will still be a leaf of that tree, so $G$ must have a spanning tree with at least $k$ leaves.)

$P$ commits $H$ by the same way as in the Hamiltonian cycle problem, and uses the protocol in Section \ref{main} to verify the first condition.

To verify the second condition, $P$ makes an additional copy of every $B(e)$ like in the Hamiltonian cycle problem. For every vertex $v$, $P$ selects only the leftmost card of $B(e)$ on every edge $e$ incident to it, scramble these cards, and puts them into an envelope. (If there are less than $d$ cards, $P$ publicly adds more \mybox{$\heartsuit$}s until there are $d$ cards before scrambling them.) Then, $P$ scrambles all envelopes together. Next, $P$ picks an envelope, opens it and looks at the front side of all cards inside (without $V$ seeing the front side). If there is exactly one \mybox{$\clubsuit$} among them, $P$ reveals all cards to let $V$ verify that there is exactly one \mybox{$\clubsuit$} (which means the corresponding vertex is a leaf); otherwise, $P$ does not reveal the cards. $P$ repeatedly does this for every envelope. $V$ accepts if there are at least $k$ envelopes with exactly one \mybox{$\clubsuit$}. This protocol also uses $\Theta(dn)$ cards.

\subsection{Bridges Puzzle} \label{bridges}
Bridges, or the Japanese name Hashiwokakero, is a logic puzzle created by a Japanese company Nikoli, which also developed many other popular logic puzzles including Sudoku, Kakuro, and Numberlink.

A Bridges puzzle consists of a rectangular grid of size $p \times q$, with some cells called \textit{islands} containing an encircled positive number of at most 8. The objective of this puzzle is to connect some pairs of islands by straight lines called \textit{bridges} that can only run horizontally or vertically. There can be at most two bridges between each pair of islands, and the bridges must satisfy the following conditions \cite{nikoli} (see Fig. \ref{fig4}).
\begin{enumerate}
	\item \textit{Island condition}: The number of bridges connected to each island must equal to the number written on that island.
	\item \textit{Noncrossing condition}: Each bridge cannot cross islands or other bridges.
	\item \textit{Connecting condition}: The bridges must connect all islands into a single component.
\end{enumerate}

\begin{figure}
\centering
\begin{tikzpicture}
\draw[step=0.6cm,color={rgb:black,1;white,4}] (0,0) grid (4.2,4.2);

\node[draw,circle] at (0.3,0.3) {3};
\node[draw,circle] at (2.1,0.3) {4};
\node[draw,circle] at (3.9,0.3) {2};
\node[draw,circle] at (0.9,0.9) {2};
\node[draw,circle] at (2.7,1.5) {2};
\node[draw,circle] at (3.9,1.5) {3};
\node[draw,circle] at (0.9,2.1) {5};
\node[draw,circle] at (2.1,2.1) {6};
\node[draw,circle] at (3.3,2.1) {1};
\node[draw,circle] at (0.3,3.3) {2};
\node[draw,circle] at (2.1,3.3) {2};
\node[draw,circle] at (3.9,3.3) {1};
\node[draw,circle] at (0.9,3.9) {1};
\end{tikzpicture}
\hspace{1.2cm}
\begin{tikzpicture}
\draw[step=0.6cm,color={rgb:black,1;white,4}] (0,0) grid (4.2,4.2);

\node[draw,circle] at (0.3,0.3) {3};
\node[draw,circle] at (2.1,0.3) {4};
\node[draw,circle] at (3.9,0.3) {2};
\node[draw,circle] at (0.9,0.9) {2};
\node[draw,circle] at (2.7,1.5) {2};
\node[draw,circle] at (3.9,1.5) {3};
\node[draw,circle] at (0.9,2.1) {5};
\node[draw,circle] at (2.1,2.1) {6};
\node[draw,circle] at (3.3,2.1) {1};
\node[draw,circle] at (0.3,3.3) {2};
\node[draw,circle] at (2.1,3.3) {2};
\node[draw,circle] at (3.9,3.3) {1};
\node[draw,circle] at (0.9,3.9) {1};

\draw[very thick] (0.6,0.3) -- (1.8,0.3);
\draw[very thick] (2.4,0.3) -- (3.6,0.3);
\draw[very thick] (3.0,1.45) -- (3.6,1.45);
\draw[very thick] (3.0,1.55) -- (3.6,1.55);
\draw[very thick] (1.2,2.05) -- (1.8,2.05);
\draw[very thick] (1.2,2.15) -- (1.8,2.15);
\draw[very thick] (2.4,2.1) -- (3.0,2.1);
\draw[very thick] (2.4,3.3) -- (3.6,3.3);

\draw[very thick] (0.25,0.6) -- (0.25,3.0);
\draw[very thick] (0.35,0.6) -- (0.35,3.0);
\draw[very thick] (0.85,1.2) -- (0.85,1.8);
\draw[very thick] (0.95,1.2) -- (0.95,1.8);
\draw[very thick] (0.9,2.4) -- (0.9,3.6);
\draw[very thick] (2.05,0.6) -- (2.05,1.8);
\draw[very thick] (2.15,0.6) -- (2.15,1.8);
\draw[very thick] (2.1,2.4) -- (2.1,3.0);
\draw[very thick] (3.9,0.6) -- (3.9,1.2);
\end{tikzpicture}
\caption{An example of a Bridges puzzle (left) and its solution (right)}
\label{fig4}
\end{figure}

Determining whether a given Bridges puzzle has a solution has been proved to be NP-complete \cite{np}. Suppose $P$ knows a solution of the puzzle and wants to convince $V$ that it has a solution without revealing any information about the solution.

Define a \textit{lip} to be a line segment of a unit length on the Bridges grid that either separates two adjacent cells or lies on the outer boundary of the grid. For each lip $\ell$, let $b(\ell)$ be the number of bridges crossing through $\ell$ (including bridges coming out of the island from $\ell$ if $\ell$ is a lip of an island cell). First, $P$ secretly places on $\ell$ a sequence encoding $b(\ell)$ in $\mathbb{Z}/3\mathbb{Z}$. Then, $P$ publicly appends six \hbox{\mybox{$\clubsuit$}s} to the end of the sequence to make it encode $b(\ell)$ in $\mathbb{Z}/9\mathbb{Z}$ (while ensuring $V$ that $b(\ell)$ is at most 2). For each island cell $c$ with a number $n(c)$, $P$ publicly places a sequence encoding $n(c)$ in $\mathbb{Z}/9\mathbb{Z}$ on $c$.

For each cell $c$, let $b(\ell_1), b(\ell_2), b(\ell_3), b(\ell_4)$ be the numbers encoded by sequences on the top lip $\ell_1$, the right lip $\ell_2$, the bottom lip $\ell_3$, and the left lip $\ell_4$ of $c$, respectively (see Fig. \ref{fig5}). The steps of verifying $P$'s solution of the puzzle are as follows.

\begin{figure}
\centering
\begin{tikzpicture}
\draw[step=0.6cm] (0,0) grid (0.6,0.6);
\node[] at (0.3,0.3) {$c$};
\node[] at (0.34,0.8) {$\ell_1$};
\node[] at (0.84,0.3) {$\ell_2$};
\node[] at (0.32,-0.2) {$\ell_3$};
\node[] at (-0.2,0.3) {$\ell_4$};
\end{tikzpicture}
\caption{Positions of lips $\ell_1,\ell_2,\ell_3,\ell_4$ surrounding a cell $c$.}
\label{fig5}
\end{figure}

\begin{enumerate}
	\item For each lip $\ell$ located on the outer boundary of the Bridges grid, verify that $b(\ell)=0$ (no bridge goes beyond the grid), which can be shown by simply revealing the sequence on $\ell$.
	\item For each island cell $c$ with a number $n(c)$, verify that $b(\ell_1)+b(\ell_2)+b(\ell_3)+b(\ell_4) \equiv n(c)$ (mod 9) (the island condition).
	\item For each non-island cell $c$, verify that $b(\ell_1) \equiv b(\ell_3)$ (mod 9) and $b(\ell_2) \equiv b(\ell_4)$ (mod 9) (the number of bridges passing through $c$ is consistent), and also that $b(\ell_1) \cdot b(\ell_2) \equiv 0$ (mod 9) (the noncrossing condition).
\end{enumerate}

Steps 2 and 3 can be performed by applying a combination of copy and arithmetic protocols, which are explained in Appendix \ref{arithmatic}, and the neighbor counting protocol in Section \ref{count} (on a $2 \times 9$ matrix to verify the congruence).

Finally, construct a public graph $G$ with all islands being vertices of $G$, and two islands having an edge in $G$ if they are on the same row or column and there is no island between them (i.e. one can construct a valid bridge between them). Let $H$ be a private subgraph of $G$ such that two islands have an edge in $H$ if there is at least one bridge between them in $P$'s solution. $P$ performs the following steps to commit $H$ by placing a sequence $B(e)$, which is either $E_2(0)$ or $E_2(1)$, on every edge $e \in G$ to indicate whether $e \in H$.

\begin{enumerate}
	\item For each edge $e \in G$ with endpoints $u$ and $v$, consider any lip $\ell$ in the Bridges puzzle that lies between the two islands corresponding to $u$ and $v$.
	\item $P$ picks the leftmost card on $\ell$ and places it as a leftmost card of $B(e)$ without revealing it.
	\item $P$ shuffles the second and third leftmost cards on $\ell$ and looks at the front side of them (without $V$ seeing the front side). Then, $P$ selects a \mybox{$\clubsuit$} among them and turns it over to reveal the front side to $V$. (If both cards are \mybox{$\clubsuit$}s, $P$ can select any of them; if only one card is a \mybox{$\clubsuit$}, $P$ must select it.)
	\item $P$ places another unselected card in Step 3 as a rightmost card of $B(e)$ without revealing it.
\end{enumerate}

Observe that if there are one or two bridges between $u$ and $v$, then $B(e)$ will be $E_2(1)$; if there is no bridge between them, then $B(e)$ will be $E_2(0)$. Hence, these steps ensure that the subgraph $H$ is compatible with $P$'s solution of the puzzle without revealing any information about it.

Verifying the connecting condition is equivalent to verifying that $H$ is a spanning subgraph of $G$, which can be done by the protocol in Section \ref{main}. In total, this protocol uses $\Theta(pq)$ cards.

\section{Future Work}
We developed a physical card-based ZKP to verify the connected spanning subgraph condition, and showed applications of this protocol to verify solutions of three well-known NP-complete problems: the Hamiltonian cycle problem, the maximum leaf spanning tree problem, and the Bridges puzzle.

A possible future work is to explore methods to physically verify other NP-complete graph theoretic problems as well as other popular logic puzzles.

\appendix
\section{Copy and Arithmetic Protocols} \label{arithmatic}
In this appendix, we explain the copy and arithmetic protocols that can be used to verify problems in Section \ref{app}.

\subsection{Copy Protocol} \label{copy}
Given a sequence $A$ encoding an integer $a$ in $\mathbb{Z}/k\mathbb{Z}$, this protocol creates $m$ additional copies of $A$ without revealing $a$. It was developed by Shinagawa et al. \cite{polygon}.

\begin{enumerate}
	\item Reverse the $k-1$ rightmost cards of $A$, i.e. move each $(i+1)$-th leftmost card of $A$ to become the $i$-th rightmost card for $i=1,2,...,k-1$. This modified sequence, called $A'$, now encodes $-a$ (mod $k$).
	\item Construct a $(m+2) \times k$ matrix $M$ by placing the sequence $A'$ in Row 1 and a sequence $E_k(0)$ in each of Rows $2,3,...,m+2$.
	\item Apply the pile-shifting shuffle to $M$. Note that Row 1 of $M$ now encodes $-a+r$ (mod $k$), and other rows now encode $r$ (mod $k$) for a uniformly random $r \in \mathbb{Z}/k\mathbb{Z}$.
	\item Turn over all cards in Row 1 of $M$. Locate the position of a \mybox{$\heartsuit$}. Suppose it is at Column $j$.
	\item Shift the columns of $M$ cyclically to the left by $j-1$ columns. Turn over all face-up cards.
	\item The sequences in Rows $2,3,...,m+2$ of $M$ now encode $r-(-a+r) \equiv a$ (mod $k$), so we now have $m+1$ copies of $A$ as desired.
\end{enumerate}

\subsection{Addition Protocol} \label{add}
Given sequences $A$ and $B$ encoding integers $a$ and $b$ in $\mathbb{Z}/k\mathbb{Z}$, respectively. This protocol computes the sum $a + b$ (mod $k$) without revealing $a$ or $b$. It was developed by Shinagawa et al. \cite{polygon}.

\begin{enumerate}
	\item Reverse the $k-1$ rightmost cards of $A$. This modified sequence, called $A'$, now encodes $-a$ (mod $k$).
	\item Construct a $2 \times k$ matrix $M$ by placing $A'$ in Row 1 and $B$ in Row 2.
	\item Apply the pile-shifting shuffle to $M$. Note that Row 1 and Row 2 of $M$ now encode $-a+r$ (mod $k$) and $b+r$ (mod $k$), respectively, for a uniformly random $r \in \mathbb{Z}/k\mathbb{Z}$.
	\item Turn over all cards in Row 1 of $M$. Locate the position of a \mybox{$\heartsuit$}. Suppose it is at Column $j$.
	\item Shift the columns of $M$ cyclically to the left by $j-1$ columns. Turn over all face-up cards.
	\item The sequence in Row 2 of $M$ now encodes $(b+r)-(-a+r) \equiv a+b$ (mod $k$) as desired.
\end{enumerate}

\subsection{Multiplication Protocol} \label{multiply}
Given sequences $A$ and $B$ encoding integers $a$ and $b$ in $\mathbb{Z}/k\mathbb{Z}$, respectively, this protocol computes the product $a \cdot b$ (mod $k$) without revealing $a$ or $b$. It is a generalization of a protocol of Shinagawa and Mizuki \cite{triangle} to multiply two integers in $\mathbb{Z}/3\mathbb{Z}$.

\begin{enumerate}
	\item Repeatedly apply the copy protocol and the addition protocol to produce sequences $A_0,A_1,A_2,...,A_{k-1}$ encoding $0,a,2a,...,$ $(k-1)a$ (mod $k$), respectively.
	\item Apply the sequence selection protocol to select the sequence $A_b$ encoding $a \cdot b$ (mod $k$).
\end{enumerate}

\end{document}